\theoremstyle{plain}
\newtheorem{Th}{Theorem}
\newtheorem{Lem}[Th]{Lemma}
\newtheorem{Prop}[Th]{Proposition}
\theoremstyle{definition}
\theoremstyle{remark}
\newtheorem*{Rem}{Remark}
\numberwithin{equation}{section}
\newcommand{\RR}{{\mathbb R}}
\newcommand{\CC}{{\mathbb C}}
\newcommand{\ZZ}{{\mathbb Z}}
\newcommand{\bx}{{\boldsymbol x}}
\newcommand{\bX}{{\boldsymbol X}}
\begin{document}

\title
{On $\tau$-function of the quadrilateral lattice}

\author{Adam Doliwa}

\address{Adam Doliwa, Wydzia{\l} Matematyki i Informatyki,
Uniwersytet Warmi\'{n}sko-Mazurski w Olsztynie,
ul.~\.{Z}o{\l}nierska~14, 10-561 Olsztyn, Poland}

\email{doliwa@matman.uwm.edu.pl}


\begin{abstract}

We investigate the $\tau$-function of the
quadrilateral lattice using the nonlocal
$\bar\partial$-dressing method, and we show that it can be identified with the 
Fredholm determinant of the
integral equation which naturally appears within that approach. \\

\noindent {\it Keywords:} integrable systems;
quadrilateral lattice; $\tau$-function; nonlocal $\bar\partial$-dressing method \\ \\
{\it 2000 MSC:}  35Q58, 37K15, 37K60, 39Axx, 45B05\\
{\it 2008 PACS:} 02.30.Ik, 02.30.Rz, 05.45.Yv 

\end{abstract}
\maketitle

\section{Introduction}
Lattices of planar quadrilaterals were introduced into integrability theory in
\cite{DCN,MQL} as discrete analogs \cite{Sauer2,Sauer} of the conjugate nets
\cite{DarbouxOS}. The basic nonlinear system describing such lattices, 
the discrete Darboux equations 
\eqref{eq:MQL-A} or \eqref{eq:MQL-Q}, appeared however first in \cite{BoKo} as
the generic discrete systems integrable by the nonlocal $\bar\partial$-dressing
method \cite{AYF,ZakMa,BogdanovManakov,Konopelchenko}. The classical Darboux
equations \cite{DarbouxOS}, called nowadays also the $N$-wave system 
\cite{ZakMa,KvLJMP}, form the basic system of equations of the multicomponent
Kadomtsev-Petviashvili (KP) hierarchy \cite{DKJM,KvLJMP}. The Darboux equations were
rediscovered \cite{ZakMa} in the
generalized matrix form, as the basic set of equations solvable by the 
$\bar\partial$-dressing method. For example, the 
KP equation \cite{DKJM} was shown in 
\cite{BogdanovManakov} to be a limiting case of the Darboux system. 
In fact the whole KP hierarchy
can be written, in the so called Miwa coordinates, as an infinite system of 
the Darboux equations \cite{BK-mKP}.

In \cite{tau-cn} it was shown that the $\tau$-function of
conjugate nets, which is a potential apparently known already 
to Darboux, and which can be viewed as the $\tau$-function of the whole 
multicomponent KP
hierarchy upon identification of the higher times of the hierarchy with
isoconjugate deformations of the nets \cite{DMMMS}, can be
identified 
with the Fredholm determinant of the integral equation inverting the 
nonlocal $\bar\partial$-problem as applied to Darboux equations. 
The $\tau$-functions play the central role
\cite{Sato,DKJM,SegalWilson,GrinevichOrlov,Kharchev,
Palmer,ASvM,BorodinDeift,Moerbeke} in
establishing the connections between integrable systems and quantum field
theory, statistical mechanics or the theory of random matrices. They
are often represented as determinants of infinite matrices or can be identified 
with the Fredholm determinant of the integral Gel'fand--Levitan--Marchenko
equation used to solve the model under consideration. Within the context 
of the  Zakharov and Shabat dressing
method \cite{ZMNP} the $\tau$-function of the KP hierarchy was
interpreted as the Fredholm determinant in \cite{PoppeSattinger}.

The
$\bar\partial$-dressing and related methods have been applied to conjugate nets,
quadrilateral lattices and their reductions and transformations in a number of
papers, including apart from the mentioned above also
\cite{ZakMa2,DMS,TQL,DMMMS,DS-sym,MMAM-cn}. The main result of the present paper,
missed somehow in earlier studies, is the interpretation of the $\tau$-function
of the quadrilateral lattice as the Fredholm determinant of the integral 
equation inverting the nonlocal $\bar\partial$-problem as studied in
\cite{BoKo}. In fact, as the discrete Darboux equations can be obtained from the
multicomponent KP hierarchy via the Miwa transformation \cite{DMMMS}, the result
is not a surprise. However, because integrable discrete systems are often considered
as more fundamental then their differential counterpart, and because the  
nonlocal $\bar\partial$-approach contains,
as particular reductions, earlier versions of the inverse
spectral (scattering) method, it was desirable to
have a direct proof. 

The paper is constructed as follows. In
Section~\ref{sec:dbar-QL} we collect the basic elements of the 
$\bar\partial$-dressing method 
and we recall the way of solving the discrete
Darboux equations within this method.   
Section~\ref{sec:Fdet-tau} is devoted to presentation of the
Fredholm determinant interpretation of the
$\tau$-function of the quadrilateral lattice. In the remaining part of the
Indroduction we present basic elements of the quadrilateral lattice theory.
 
In affine representation the quadrilateral lattice is a mapping
$\bx:\ZZ^N\to\RR^M$, and the planarity of its elementary quadrilaterals can be
formulated in terms of the system of the discrete Laplace equations
\begin{equation}  \label{eq:Laplace}
\Delta_i\Delta_j\bx=(T_{i} A_{ij})\Delta_i\bx+
(T_j A_{ji})\Delta_j\bx,\quad i\not= j, \quad  i,j=1 ,\dots, N ,
\end{equation}
where $T_i$ denotes the shift operator in $n_i$, and $\Delta_i = T_i -1$
is the corresponding partial difference. For $N >2$ the coefficients 
$A_{ij}$ of the Laplace
equation \eqref{eq:Laplace} satisfy the discrete Darboux  
equations
\begin{equation} \label{eq:MQL-A}
\Delta_k A_{ij} =
 (T_jA_{jk})A_{ij} +(T_k A_{kj})A_{ik} - (T_kA_{ij})A_{ik},
\;\; i\neq j\neq k\neq i ,
\end{equation}
which is the integrable discrete analog of the Darboux equations describing
multidimensional conjugate nets~\cite{DarbouxOS}.
Equations \eqref{eq:MQL-A} imply existence of the potentials
$H_i$~\cite{BoKo,MQL}
\begin{equation}   \label{def:A-H}
A_{ij}= \frac{\Delta_j H_i}{H_i} , \qquad i\ne j ,
\end{equation}
which, in analogy to the continuous case, are called the 
Lam\'e coefficients.

Introduce the suitably scaled tangent 
vectors $\bX_i$, $i=1,...,N$,
\begin{equation}  \label{def:HX}
\Delta_i\bx = (T_iH_i) \bX_i,
\end{equation}
then
\begin{equation} \label{eq:lin-X}
\Delta_j\bX_i = (T_j Q_{ij})\bX_j,    \qquad i\ne j \; ,
\end{equation}
and the compatibility condition for the system \eqref{eq:lin-X}
gives the following new form of the discrete Darboux equations
\begin{equation} \label{eq:MQL-Q}
\Delta_kQ_{ij} = (T_kQ_{ik})Q_{kj}, \qquad i\neq j\neq k\neq i.
\end{equation}
The 
the Lam\'e coefficients $H_i$ solve the linear equations
\begin{equation} \label{eq:lin-H}
\Delta_iH_j = (T_iH_i) Q_{ij}, \qquad i\ne j \; ,
\end{equation}
 whose compatibility gives equations
\eqref{eq:MQL-Q} again. 

Define \cite{KoSchief2,DS-sym} the  potentials $\rho_i$ as solutions of compatible equations
\begin{equation} \label{eq:rho-constr}
\frac{T_j\rho_i}{\rho_i} = 1 - (T_iQ_{ji})(T_jQ_{ij}), \qquad  i\ne j.
\end{equation}
The right hand side of equation \eqref{eq:rho-constr} is symmetric with
respect to the interchange of $i$ and $j$, which implies the
existence of a potential $\tau$, such that
\begin{equation} \label{eq:rho-tau}
\rho_i = \frac{T_i\tau}{\tau} ,
\end{equation}
which is called
the $\tau$-{\it function} of the quadrilateral
lattice~\cite{DS-sym}.

\section{The $\bar\partial$-dressing method and the discrete Darboux equations}
\label{sec:dbar-QL}
In this Section we recall \cite{ZakMa,BogdanovManakov,Konopelchenko}
the basic idea of the nonlocal $\bar\partial$-dressing method in application to
the quadrilateral lattice and the discrete Darboux equations \cite{BoKo}.

Consider the following
integro-differential equation in the complex plane ${\mathbb C}$
\begin{equation} \label{eq:db-nonl}
\bar\partial \chi(\lambda) = \bar\partial \eta(\lambda) + 
\int_{\mathbb C}  R(\lambda,\lambda')\chi(\lambda') \: 
d\lambda'\wedge d\bar\lambda',
\end{equation}
where $R(\lambda,\lambda')$ is a given $\bar\partial$ datum, which decreases
quickly enough at $\infty$ in $\lambda$ and $\lambda^\prime$, and the 
function $\eta(\lambda)$, the normalization of the unknown $\chi(\lambda)$, is a
given rational function, which describes the polar behavior of $\chi(\lambda)$ in $\CC$
and its behavior at $\infty$: 
\begin{equation*}
\chi(\lambda) - \eta(\lambda) \to 0, \qquad 
\text{for} \quad |\lambda|\to\infty.
\end{equation*}
We remark that the dependence of $\chi(\lambda)$ and $R(\lambda,\lambda')$ on
$\bar\lambda$ and $\bar\lambda^\prime$ will be systematically omitted, for
notational convenience.

Due to the generalized Cauchy formula the nonlocal $\bar\partial$ 
problem~\eqref{eq:db-nonl} 
is equivalent to the following Fredholm integral equation of the 
second kind  
\begin{equation} \label{eq:Fredh-db}
\chi(\lambda) = \eta(\lambda) - \int_{\mathbb C} K(\lambda, \lambda') 
\chi(\lambda') d\lambda'\wedge d\bar\lambda',
\end{equation}
with the kernel
\begin{equation} \label{eq:Fredh-ker-db}
K(\lambda, \lambda') = \frac{1}{2\pi i} \int_{\mathbb C} 
\frac{R(\lambda'', \lambda') }{\lambda'' - \lambda}
d\lambda''\wedge d\bar\lambda''  .  
\end{equation}
Recall (see, for example \cite{Smithies}) that 
the Fredholm determinant $D_F$ is defined by the series
\begin{equation}  \label{eq:Fr-det}
D_F = 1 +
\sum_{m=1}^\infty \frac{1}{m!} \int_{\CC^m}
K \begin{pmatrix} \zeta_1 & \zeta_2 & \dots  & \zeta_m \\
\zeta_1 & \zeta_2 & \dots  & \zeta_m
\end{pmatrix} d\zeta_1 \wedge d\bar\zeta_1 \dots d\zeta_m \wedge d\bar\zeta_m \; ,
\end{equation}
where
\begin{equation*}  \label{eq:Fr-KK}
K \begin{pmatrix} \zeta_1 & \zeta_2 & \dots  & \zeta_m \\ 
\mu_1 & \mu_2 & \dots  & \mu_m
\end{pmatrix} = \det \begin{pmatrix} 
K(\zeta_i,\mu_j)
\end{pmatrix}_{1\leq i,j \leq m} .
\end{equation*}
For a nonvanishing Fredholm determinant
the solution of 
\eqref{eq:Fredh-db} can be written in the form
\begin{equation} \label{eq:F-sol}
\chi(\lambda) = \eta(\lambda) - \int_\CC \frac{D_F(\lambda,\lambda^\prime)}{D_F}
\eta(\lambda^\prime) d\lambda'\wedge d\bar\lambda',
\end{equation}
where the Fredholm minor is defined by the series
\begin{equation}  \label{eq:Fr-min}
D_F(\lambda,\lambda^\prime) = \sum_{m=0}^\infty \frac{1}{m!} 
\int_{\CC^m} K \begin{pmatrix} 
\lambda & \zeta_1 &  \dots  & \zeta_m \\ 
\lambda^\prime & \zeta_1 &  \dots  & \zeta_m
\end{pmatrix} d\zeta_1 \wedge d\bar\zeta_1 \dots d\zeta_m \wedge d\bar\zeta_m  \; .
\end{equation}

Let $\lambda_i^\pm\in\CC$, $i=1,\dots,N$ be pairs of distinct points of the
complex plane. To get the $\bar\partial$-dressing method 
of construction of solutions to the discrete Darboux equations 
one introduces \cite{BoKo} the following 
dependence of the kernel $R$ on the variables $n=(n_1,\dots ,n_N)\in\ZZ^N$
\begin{equation} \label{eq:evol-R-dis-1}
T_i R(\lambda,\lambda^\prime; n) =  L_i(\lambda)^{-1} 
R(\lambda,\lambda^\prime) L_i(\lambda^\prime),
\qquad L_i(\lambda) = \frac{\lambda -\lambda^-_i}{\lambda - \lambda^+_i},
\end{equation}
or equivalently 
\begin{equation} \label{eq:evol-R-dis}
R(\lambda,\lambda^\prime; n) = 
G(\lambda; n)^{-1} 
R_0(\lambda,\lambda^\prime) G(\lambda^\prime; n),
\qquad
G(\lambda;n)= \prod_{i=1}^N L_i(\lambda)^{n_i} ,
\end{equation}
where $R_0(\lambda,\lambda^\prime)$ is independent of $n$. We assume that $R_0$
decreases at $\lambda_i^\pm$ and in poles of the normalization function $\eta$
fast enough such that $\chi - \eta$ is regular in these points
\cite{BogdanovManakov,BoKo}.
\begin{Rem}
In the paper we always assume that the kernel $R$ in the nonlocal 
$\bar\partial$
problem is such that the Fredholm equation \eqref{eq:Fredh-db}
is uniquely solvable. Then, by the Fredholm alternative, the homogenous 
equation with $\eta=0$ has only the trivial solution.
\end{Rem}

Directly one can verify the following result which gives evolution of the
kernel $K$ of the Fredholm equation \eqref{eq:Fredh-db} implied by evolution of
the $\bar\partial$ datum.
\begin{Lem}
The evolution \eqref{eq:evol-R-dis-1} of the kernel $R$ implies that the
kernel $K$ of the integral equation \eqref{eq:Fredh-db}
is subject to the equation 
\begin{equation} \label{eq:evol-K-dis} 
T_i K(\lambda,\lambda^\prime; n)  = 
L_i(\lambda)^{-1} \left[ K(\lambda,\lambda^\prime; n) + \left( L_i(\lambda) -1
\right) K(\lambda^-_i,\lambda^\prime;n) \right] L_i(\lambda^\prime), 
\end{equation}
moreover
\begin{equation} \label{eq:Ti-K-l+}
T_iK(\lambda^+_i,\lambda^\prime) = 
K(\lambda^-_i,\lambda^\prime;n) L_i(\lambda^\prime).
\end{equation}
\end{Lem}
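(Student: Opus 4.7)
The plan is to prove this by direct computation, starting from the definition \eqref{eq:Fredh-ker-db} of $K$ in terms of $R$, applying the transformation rule \eqref{eq:evol-R-dis-1}, and using a partial-fraction identity to rewrite the resulting integrand.

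First I would write
\[
T_i K(\lambda,\lambda';n) = \frac{1}{2\pi i}\int_{\CC}\frac{L_i(\lambda'')^{-1}\, R(\lambda'',\lambda';n)\, L_i(\lambda')}{\lambda''-\lambda}\,d\lambda''\wedge d\bar\lambda'',
\]
and split the factor $L_i(\lambda'')^{-1}$ as $L_i(\lambda)^{-1}+\bigl(L_i(\lambda'')^{-1}-L_i(\lambda)^{-1}\bigr)$. The first piece pulls out of the integral and produces $L_i(\lambda)^{-1} K(\lambda,\lambda';n) L_i(\lambda')$. The key algebraic identity — which is the only real content of the lemma — is that the apparent singularity at $\lambda''=\lambda$ of the second piece cancels:
\[
\frac{L_i(\lambda'')^{-1}-L_i(\lambda)^{-1}}{\lambda''-\lambda} \;=\; \frac{\lambda_i^+-\lambda_i^-}{(\lambda''-\lambda_i^-)(\lambda-\lambda_i^-)},
\]
which one checks by bringing the difference $\frac{\lambda''-\lambda_i^+}{\lambda''-\lambda_i^-}-\frac{\lambda-\lambda_i^+}{\lambda-\lambda_i^-}$ to a common denominator; the numerator factors as $(\lambda''-\lambda)(\lambda_i^+-\lambda_i^-)$.

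Using this identity, the second piece of the integral collapses to
\[
\frac{\lambda_i^+-\lambda_i^-}{\lambda-\lambda_i^-}\cdot\frac{1}{2\pi i}\int_{\CC}\frac{R(\lambda'',\lambda';n)}{\lambda''-\lambda_i^-}\,d\lambda''\wedge d\bar\lambda'' \;=\; \frac{\lambda_i^+-\lambda_i^-}{\lambda-\lambda_i^-}\, K(\lambda_i^-,\lambda';n).
\]
A direct computation from $L_i(\lambda)=(\lambda-\lambda_i^-)/(\lambda-\lambda_i^+)$ shows that $\frac{\lambda_i^+-\lambda_i^-}{\lambda-\lambda_i^-} = 1-L_i(\lambda)^{-1} = L_i(\lambda)^{-1}\bigl(L_i(\lambda)-1\bigr)$, which, after multiplying through by $L_i(\lambda')$, puts the result in exactly the form \eqref{eq:evol-K-dis}.

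Finally, for \eqref{eq:Ti-K-l+}, I would simply specialize \eqref{eq:evol-K-dis} to $\lambda=\lambda_i^+$: there $L_i(\lambda_i^+)^{-1}=0$ and the coefficient $\frac{\lambda_i^+-\lambda_i^-}{\lambda-\lambda_i^-}$ of $K(\lambda_i^-,\lambda';n)L_i(\lambda')$ equals $1$, yielding $T_i K(\lambda_i^+,\lambda')=K(\lambda_i^-,\lambda';n)L_i(\lambda')$. (Alternatively, one reads the same result directly from the integral expression after the partial-fraction split, since the $L_i(\lambda)^{-1}K(\lambda,\lambda')$ term vanishes at $\lambda=\lambda_i^+$.) The only subtle point is tracking that the integrand remains regular so the splittings are legitimate, which follows from the assumed decay of $R$ at $\lambda_i^\pm$ stated just after \eqref{eq:evol-R-dis}; there is no analytical obstacle, and the whole argument reduces to the rational identity above.
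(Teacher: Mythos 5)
Your proposal is correct: the partial-fraction identity $\bigl(L_i(\lambda'')^{-1}-L_i(\lambda)^{-1}\bigr)/(\lambda''-\lambda)=(\lambda_i^+-\lambda_i^-)/\bigl((\lambda''-\lambda_i^-)(\lambda-\lambda_i^-)\bigr)$ checks out, the coefficient $(\lambda_i^+-\lambda_i^-)/(\lambda-\lambda_i^-)=L_i(\lambda)^{-1}\bigl(L_i(\lambda)-1\bigr)$ reproduces \eqref{eq:evol-K-dis}, and the specialization at $\lambda=\lambda_i^+$ (where $L_i(\lambda)^{-1}=0$) gives \eqref{eq:Ti-K-l+}. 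The paper offers no written proof, stating only that the lemma can be verified directly from \eqref{eq:Fredh-ker-db} and \eqref{eq:evol-R-dis-1}, and your computation is exactly that intended direct verification.
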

This leads to the following crucial, for our purposes, result.
\begin{Lem} 
When $\chi(\lambda;n)$ is the unique, by assumption, solution of 
the $\bar\partial$ problem \eqref{eq:db-nonl} with the kernel $R$
evolving according to \eqref{eq:evol-R-dis-1}, and  
with normalization $\eta(\lambda;n)$, then the function
$L_i(\lambda) T_i \chi (\lambda;n) $
is the solution of the same $\bar\partial$ problem
but with  the new normalization
\begin{equation} 
\eta^{(i)}(\lambda;n) =
L_i(\lambda) T_i \eta(\lambda;n) + (L_i(\lambda)-1) 
\lim_{\lambda\to\lambda^+_i}\left(T_i \chi(\lambda;n) - T_i \eta(\lambda;n) 
\right).
\end{equation}
\end{Lem}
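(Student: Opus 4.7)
The plan is to convert the $T_i$-shifted $\bar\partial$ equation for $\chi$ into a $\bar\partial$ equation for $\psi(\lambda;n):=L_i(\lambda)T_i\chi(\lambda;n)$ by multiplying through by $L_i(\lambda)$. The key point is that the meromorphic multiplier $L_i(\lambda)$ has a simple pole at $\lambda=\lambda_i^+$, so its distributional $\bar\partial$ derivative carries a Dirac mass there, and this is precisely what furnishes the extra term in the new normalization $\eta^{(i)}$.

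First I would apply $T_i$ to \eqref{eq:db-nonl}, use \eqref{eq:evol-R-dis-1} to rewrite $T_iR$ as $L_i(\lambda)^{-1} R(\lambda,\lambda')L_i(\lambda')$, and then multiply both sides by $L_i(\lambda)$. After this manipulation, the integrand absorbs the factor $L_i(\lambda')$ into $T_i\chi(\lambda')$, so the right-hand side already has the form $L_i(\lambda)\bar\partial T_i\eta(\lambda)+\int R(\lambda,\lambda')\psi(\lambda')\,d\lambda'\wedge d\bar\lambda'$, with the same kernel $R$ as in the original problem.

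Next I would expand $\bar\partial\psi$ by the Leibniz rule as $L_i\,\bar\partial T_i\chi+(\bar\partial L_i)\,T_i\chi$. Writing $L_i(\lambda)=1+(\lambda_i^+-\lambda_i^-)/(\lambda-\lambda_i^+)$ and using $\bar\partial(\lambda-\lambda_0)^{-1}=\pi\delta^{(2)}(\lambda-\lambda_0)$, the derivative $\bar\partial L_i$ is concentrated at $\lambda_i^+$. Because the standing hypothesis on $R_0$ guarantees that $T_i\chi-T_i\eta$ extends regularly through $\lambda_i^+$, the product $(\bar\partial L_i)(T_i\chi-T_i\eta)$ picks up the finite limit $c_i:=\lim_{\lambda\to\lambda_i^+}(T_i\chi(\lambda;n)-T_i\eta(\lambda;n))$, and since $\bar\partial L_i=\bar\partial(L_i-1)$ this Dirac contribution may be rewritten as $\bar\partial\!\bigl((L_i-1)c_i\bigr)$. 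The remaining piece $(\bar\partial L_i)T_i\eta$ combines with $L_i\,\bar\partial T_i\eta$ into $\bar\partial(L_i T_i\eta)$.

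Collecting the pieces, $\bar\partial\psi$ equals $L_i\bar\partial T_i\chi+\bar\partial(L_iT_i\eta)-L_i\bar\partial T_i\eta+\bar\partial((L_i-1)c_i)$; substituting the identity from step one cancels the $L_i\bar\partial T_i\eta$ terms and leaves exactly $\bar\partial\eta^{(i)}+\int R(\lambda,\lambda')\psi(\lambda')\,d\lambda'\wedge d\bar\lambda'$ with $\eta^{(i)}$ as stated. A brief check at infinity, where $L_i\to1$ and $T_i\chi-T_i\eta\to0$, shows $\psi-\eta^{(i)}\to0$, so $\psi$ is indeed a properly normalized solution. The main subtlety is the handling of the delta contribution at $\lambda_i^+$: one has to recognize that it is not a genuine source term but can be absorbed into $\bar\partial$ of the rational correction $(L_i-1)c_i$, and it is precisely the regularity of $\chi-\eta$ at $\lambda_i^\pm$ built into the $\bar\partial$-dressing scheme that makes this absorption well-defined.
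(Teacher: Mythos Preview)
Your argument is correct, but it proceeds along a genuinely different route from the paper. The paper works entirely with the Fredholm integral equation \eqref{eq:Fredh-db}: it applies the shift $T_i$ to that equation, invokes Lemma~1 (the evolution rules \eqref{eq:evol-K-dis} and \eqref{eq:Ti-K-l+} for the Fredholm kernel $K$) to rewrite $T_iK$ in terms of the unshifted kernel, and then recognizes the extra integral $\int T_iK(\lambda_i^+,\lambda')\psi(\lambda')\,d\lambda'\wedge d\bar\lambda'$ as minus the limit $\lim_{\lambda\to\lambda_i^+}(T_i\chi-T_i\eta)$ by evaluating the shifted Fredholm equation at $\lambda_i^+$. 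You instead stay with the $\bar\partial$ equation \eqref{eq:db-nonl} itself, use the evolution \eqref{eq:evol-R-dis-1} of the datum $R$ directly, and extract the same constant $c_i$ from the Dirac contribution of $\bar\partial L_i$ at $\lambda_i^+$ via the Leibniz rule. Your approach is arguably more transparent about the analytic origin of the correction term---it visibly comes from the pole of $L_i$---and it bypasses Lemma~1 altogether; the paper's approach, on the other hand, keeps everything inside the Fredholm framework that is used in the remainder of the article for the determinant computations, so it dovetails more naturally with what follows.
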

\begin{Rem}
We allow for the normalization to have poles at the distinguished points
$\lambda^\pm_i$ of the construction. 
\end{Rem}
\begin{proof}
Application of equations \eqref{eq:evol-K-dis} and \eqref{eq:Ti-K-l+} to
the shifted formula \eqref{eq:Fredh-db} leads to
\begin{align*}
L_i(\lambda) T_i & \chi (\lambda;n) =  L_i(\lambda) T_i \eta(\lambda;n) 
- \int_{\mathbb C} K(\lambda, \lambda';n) 
L_i(\lambda')T_i \chi(\lambda';n) d\lambda'\wedge d\bar\lambda' \\&
- (L_i(\lambda)-1) \int_{\mathbb C} T_i K(\lambda^+_i, \lambda';n) 
L_i(\lambda')T_i \chi(\lambda';n) d\lambda'\wedge d\bar\lambda'.
\end{align*}
To obtain the statement notice that
\begin{equation*}
\lim_{\lambda\to\lambda^+_i}\left(T_i \chi(\lambda;n) - T_i
\eta(\lambda;n)\right)
= -  \int_{\mathbb C} T_i K(\lambda^+_i, \lambda';n) 
L_i(\lambda')T_i \chi(\lambda';n) d\lambda'\wedge d\bar\lambda'.
\end{equation*}
\end{proof}

The following result allows to give the $\bar\partial$-method of 
construction of solutions of the discrete Darboux equations
\begin{Prop}
Let $\chi_i(\lambda;n)$, $i=1,\dots,N$ be solution of
the $\bar\partial$ problem \eqref{eq:db-nonl} with the normalization
\begin{equation} \label{eq:n-chi-i}
\eta_i(\lambda) =L_i(\lambda)-1 = \frac{\lambda_i^+ - \lambda_i^-}{\lambda -
\lambda_i^+},
\end{equation}
denote
\begin{equation}
G_i(\lambda;n)= \prod_{j=1, j\ne i}^N L_j(\lambda)^{n_j} ,
\end{equation}
then the functions
\begin{align}
\psi_i(\lambda;n) & = \chi_i(\lambda;n) G(\lambda;n) G_i(\lambda_i^+;n)^{-1}, \\
Q_{ij}(n) & = \chi_i(\lambda_j^+;n) 
\frac{G_j(\lambda_j^+ ;n)}{G_i(\lambda_i^+;n)} L_j(\lambda_i^+).
\label{eq:def-Qij}
\end{align}
satisfy equations
\begin{align} \label{eq:psi-i-lin}
\Delta_j\psi_i(\lambda;n) & = T_j Q_{ij}(n) \psi_j(\lambda;n) , \qquad j\ne i,\\
\label{eq:Q-ij-nlin}
\Delta_j Q_{ik}(n) & = T_jQ_{ij}(n) Q_{jk}(n), \qquad i\ne j \ne k \ne i.
\end{align}
\end{Prop}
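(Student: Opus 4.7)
My plan is to apply the previous Lemma to each $\chi_i$ under the shift $T_j$, $j \ne i$, producing an explicit decomposition of $L_j(\lambda)T_j\chi_i(\lambda;n)$ as a linear combination of $\chi_i$ and $\chi_j$. Since the normalization $\eta_i = L_i - 1$ is independent of $n$, one has $T_j\eta_i = \eta_i$; moreover $\chi_i - \eta_i$ is regular at $\lambda_j^+$ by the standing assumption on $R_0$. Using the handy identification $L_j(\lambda) - 1 = \eta_j(\lambda)$, the Lemma then says that $L_jT_j\chi_i$ solves the same $\bar\partial$-problem with normalization
\begin{equation*}
\eta_i^{(j)}(\lambda) = L_j(\lambda)\eta_i(\lambda) + \eta_j(\lambda)\bigl[T_j\chi_i(\lambda_j^+;n) - \eta_i(\lambda_j^+)\bigr].
\end{equation*}

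The crucial algebraic step is the rational-function identity
\begin{equation*}
\bigl(L_j(\lambda) - L_j(\lambda_i^+)\bigr)\eta_i(\lambda) = \eta_i(\lambda_j^+)\,\eta_j(\lambda),
\end{equation*}
which I would verify by comparing pole structure: the apparent singularity of the left-hand side at $\lambda_i^+$ is cancelled by the prefactor, leaving a single simple pole at $\lambda_j^+$ whose residue coincides with that of the right-hand side, and both sides vanish at $\infty$. Substituting this back into $\eta_i^{(j)}$ telescopes the cross term and produces
\begin{equation*}
\eta_i^{(j)}(\lambda) = L_j(\lambda_i^+)\,\eta_i(\lambda) + T_j\chi_i(\lambda_j^+;n)\,\eta_j(\lambda),
\end{equation*}
which is the normalization of $L_j(\lambda_i^+)\chi_i(\lambda;n) + T_j\chi_i(\lambda_j^+;n)\chi_j(\lambda;n)$. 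The uniqueness assumption (Remark) combined with linearity of the $\bar\partial$-problem thus yields the key identity
\begin{equation*}
L_j(\lambda)\,T_j\chi_i(\lambda;n) = L_j(\lambda_i^+)\,\chi_i(\lambda;n) + T_j\chi_i(\lambda_j^+;n)\,\chi_j(\lambda;n).
\end{equation*}

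From here the linear equation \eqref{eq:psi-i-lin} follows by direct substitution into $T_j\psi_i$, using $T_jG = L_j(\lambda)G$ and $T_jG_i(\lambda_i^+) = L_j(\lambda_i^+)G_i(\lambda_i^+)$ (both valid because $j \ne i$), together with the rewrite $\chi_jG = \psi_j\,G_j(\lambda_j^+)$; the coefficient of $\psi_j$ that emerges is then matched against $T_jQ_{ij}$ computed from \eqref{eq:def-Qij}. For the nonlinear relation \eqref{eq:Q-ij-nlin}, either the compatibility of the linear system \eqref{eq:psi-i-lin} can be invoked (requiring $T_kT_j\psi_i = T_jT_k\psi_i$ forces the Darboux relation on the coefficients), or more directly one evaluates the key identity at $\lambda = \lambda_k^+$ for $k \ne i,j$, obtaining a three-term relation among $\chi_i, \chi_j, T_j\chi_i$ at the spectral values $\lambda_k^+$ which, after the $G$- and $L$-factors of \eqref{eq:def-Qij} telescope, becomes $\Delta_jQ_{ik} = T_jQ_{ij}\,Q_{jk}$. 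The main obstacle is the rational-function simplification of the second step: although the calculation is elementary, keeping track of how the two poles at $\lambda_i^+$ and $\lambda_j^+$ interact with the $L$-factors in $G$ is where bookkeeping errors are most likely; once the key identity is in hand, the rest is formal substitution.
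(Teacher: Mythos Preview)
Your approach is essentially identical to the paper's: apply the preceding Lemma to $\chi_i$ under the shift $T_j$, match the resulting normalization to that of a linear combination $L_j(\lambda_i^+)\chi_i + T_j\chi_i(\lambda_j^+)\chi_j$, invoke uniqueness (Fredholm alternative) to obtain the key identity, and then pass to $\psi_i$ and $Q_{ij}$ by the indicated substitutions; the paper likewise notes that \eqref{eq:Q-ij-nlin} follows either from compatibility or by evaluating the key identity at $\lambda=\lambda_k^+$. Your write-up is in fact more explicit than the paper's, which simply asserts that the two sides have the same normalization without displaying the rational-function identity $(L_j(\lambda)-L_j(\lambda_i^+))\eta_i(\lambda)=\eta_i(\lambda_j^+)\eta_j(\lambda)$ that makes this transparent.
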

\begin{proof}
The combination $L_j(\lambda) T_j \chi_i (\lambda;n) L_j(\lambda_i^+)^{-1} -
\chi_i (\lambda;n)$, $j\ne i$ satisfies the integral equation \eqref{eq:Fredh-db} with the 
same normalization as $T_j \chi_i (\lambda_j^+) \chi_j(\lambda;n)$. Therefore, by
the Fredholm alternative,
\begin{equation} \label{eq:chi-i-lin}
L_j(\lambda) T_j \chi_i (\lambda;n) L_j(\lambda_i^+)^{-1} - \chi_i (\lambda;n)
= T_j \chi_i (\lambda_j^+) \chi_j(\lambda;n) ,
\qquad j\ne i,
\end{equation}
which leads to the linear
system \eqref{eq:psi-i-lin}. Its compatibility \eqref{eq:Q-ij-nlin}
can be also obtained by
evaluating equation \eqref{eq:chi-i-lin} in the points $\lambda_k^+$, $k\ne
i,j$.
\end{proof}
For completness we recall also the following result of \cite{BoKo}.
\begin{Th}
Let $\chi(\lambda;n)$ be solution of
the $\bar\partial$ problem \eqref{eq:db-nonl} with the canonical
normalization $\eta(\lambda) = 1$ then the function
\begin{equation*}
\psi(\lambda;n) = \chi(\lambda;n) G (\lambda;n)
\end{equation*}
satisfies the discrete Laplace system
\begin{equation*}
\Delta_i\Delta_j \psi(\lambda;n) = (T_iA_{ij})(n) \Delta_i \psi(\lambda;n) + 
(T_jA_{ji})(n) \Delta_j \psi(\lambda;n) , \qquad i\ne j,
\end{equation*}
with coefficients
\begin{equation*}  \label{eq:Aij-db}
A_{ij}(n) = L_j(\lambda_i^+)
\frac{T_j \chi(\lambda_i^+;n)}{\chi(\lambda_i^+;n)}-1, \qquad i\ne j,
\end{equation*}
while the corresponding Lam\'{e} coefficients are given by
\begin{equation*}
H_i(n) = \chi(\lambda_i^+;n)G_i(\lambda_i^+;n).
\end{equation*}
\end{Th}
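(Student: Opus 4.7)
The plan is to follow the same Fredholm-alternative strategy used in the preceding Proposition, but starting from the canonical normalization. The central step will be to identify $L_i(\lambda)T_i\chi(\lambda;n) - \chi(\lambda;n)$ as a scalar multiple of the solution $\chi_i(\lambda;n)$ introduced there, and then push this identity through by multiplication by $G(\lambda;n)$.

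First I would apply the second Lemma of Section~\ref{sec:dbar-QL} to the normalization $\eta(\lambda)=1$. Since $T_i\eta=1$, it tells us that $L_i(\lambda) T_i\chi(\lambda;n)$ solves the same $\bar\partial$-problem with normalization
\begin{equation*}
L_i(\lambda) + (L_i(\lambda)-1)\bigl(T_i\chi(\lambda_i^+;n)-1\bigr).
\end{equation*}
Subtracting $\chi(\lambda;n)$, whose normalization is $1$, the combination $L_i(\lambda)T_i\chi(\lambda;n)-\chi(\lambda;n)$ solves the problem with normalization $(L_i(\lambda)-1)\,T_i\chi(\lambda_i^+;n)$, which is a scalar multiple (in $\lambda$) of the normalization \eqref{eq:n-chi-i} used to define $\chi_i$. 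By the uniqueness assumption on the kernel $R$, the Fredholm alternative then forces
\begin{equation*}
L_i(\lambda) T_i\chi(\lambda;n) - \chi(\lambda;n) = T_i\chi(\lambda_i^+;n)\,\chi_i(\lambda;n).
\end{equation*}

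Second, I would multiply both sides by $G(\lambda;n)$ and exploit $L_i(\lambda) G(\lambda;n) = T_i G(\lambda;n)$ on the left to convert it into $\Delta_i \psi(\lambda;n)$. On the right, the identity $\chi_i(\lambda;n) G(\lambda;n) = \psi_i(\lambda;n)\,G_i(\lambda_i^+;n)$ from the previous Proposition, combined with the invariance $T_i G_i(\lambda_i^+;n) = G_i(\lambda_i^+;n)$ (since $G_i$ omits the factor $L_i^{n_i}$), yields
\begin{equation*}
\Delta_i \psi(\lambda;n) = T_i H_i(n)\,\psi_i(\lambda;n), \qquad H_i(n)=\chi(\lambda_i^+;n)G_i(\lambda_i^+;n),
\end{equation*}
which is the discrete analog of the scaling \eqref{def:HX}.

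Third, combining this with the linear system \eqref{eq:psi-i-lin} for $\psi_i$ and expanding $\Delta_j\Delta_i\psi$ by the discrete Leibniz rule, the coefficient of $\psi_i$ produces $T_i A_{ij} = T_i(\Delta_j H_i)/T_i H_i$, i.e.\ $A_{ij}=\Delta_j H_i/H_i$, matching \eqref{def:A-H}; substituting the explicit $H_i$ and using $T_j G_i(\lambda_i^+;n)/G_i(\lambda_i^+;n)=L_j(\lambda_i^+)$ reduces this to the formula stated in the Theorem. The coefficient of $\psi_j$ is automatic from equation~\eqref{eq:lin-H}, which itself follows from evaluating the key identity at $\lambda=\lambda_j^+$ in conjunction with the definition \eqref{eq:def-Qij} of $Q_{ij}$. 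The only delicate point is the first step---correctly computing the normalization of $L_i T_i\chi$ and recognizing the remainder as a multiple of $\chi_i$; everything past that is algebraic book-keeping governed by the already-proved Proposition.
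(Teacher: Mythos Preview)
The paper does not supply its own proof of this theorem; it is merely recalled from \cite{BoKo}. Your argument is correct and is precisely the natural proof given the machinery already in place---it mirrors the paper's proof of the preceding Proposition, replacing the normalization $L_i(\lambda)-1$ by the canonical one, using the Fredholm alternative to obtain $L_i(\lambda)T_i\chi-\chi=T_i\chi(\lambda_i^+)\,\chi_i$, and then recognizing this (after multiplication by $G$) as the relation $\Delta_i\psi=(T_iH_i)\psi_i$, from which the Laplace system follows via~\eqref{eq:psi-i-lin}.
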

\begin{Rem}
Various $n$-independent measures $\mathrm{d}\mu_a$ on $\CC$ give rise to
coordinates 
\begin{equation*}
x^a(n) = \int_\CC \psi(\lambda;n) \mathrm{d}\mu_a,
\end{equation*}
of quadrilateral lattices, having $H_i(n)$ as the Lam\'{e} coefficients,
and the functions
\begin{equation*}
X_i^a(n) = \int_\CC \psi_i(\lambda;n) \mathrm{d}\mu_a,
\end{equation*}
being coordinates of the normalized tangent vectors. To obtain real lattices,
the kernel $R_0$, the points $\lambda_i^\pm$, and the measures 
$\mathrm{d}\mu_a$ should satisfy certain additional conditions.
\end{Rem}

\section{The first potentials and the $\tau$-function}
\label{sec:Fdet-tau}

To give the meaning of the $\tau$-function within the $\bar\partial$-dressing
method we first present the meaning of the potentials $\rho_i$ defined
by equations \eqref{eq:rho-constr}. 
\begin{Prop} \label{prop:rho}
Within the $\bar\partial$-dressing method the potentials $\rho_i$ can be
identified with
\begin{equation}
\rho_i(n) = - \chi_i(\lambda_i^-;n) 
\frac{G_i(\lambda_i^-;n)}{G_i(\lambda_i^+;n)}.
\end{equation}
\end{Prop}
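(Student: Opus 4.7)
The plan is to verify the candidate formula by substituting it into the defining relation \eqref{eq:rho-constr} and checking that both sides agree. Since \eqref{eq:rho-constr} fixes $\rho_i$ only up to a multiplicative function of $n_i$, it suffices to compare the ratios $T_j\rho_i/\rho_i$ for $j\ne i$.

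First I would compute $T_j\rho_i/\rho_i$ directly from the candidate expression. The observation that for $j\ne i$ the product $G_i(\lambda;n)$ does not contain the factor $L_j^{n_j}$-wait, it does contain it; the point is that $T_j G_i(\lambda;n)=L_j(\lambda)G_i(\lambda;n)$ since $j\ne i$. This cancels the $G$-factors nicely and yields
\begin{equation*}
\frac{T_j\rho_i}{\rho_i}=\frac{T_j\chi_i(\lambda_i^-;n)}{\chi_i(\lambda_i^-;n)}\cdot\frac{L_j(\lambda_i^-)}{L_j(\lambda_i^+)}.
\end{equation*}
Evaluating the fundamental relation \eqref{eq:chi-i-lin} at $\lambda=\lambda_i^-$ (where the normalization $\eta_i$ is regular) expresses the right-hand side as $1+T_j\chi_i(\lambda_j^+;n)\,\chi_j(\lambda_i^-;n)/\chi_i(\lambda_i^-;n)$.

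The second, and key, step is to rewrite $\chi_j(\lambda_i^-;n)$ in a form involving the quantities entering $Q_{ji}$. For this I would exploit the vanishing $L_i(\lambda_i^-)=0$: applying \eqref{eq:chi-i-lin} with the roles of $i$ and $j$ interchanged and evaluating at $\lambda=\lambda_i^-$ kills the $T_i\chi_j(\lambda_i^-;n)$ term and produces the clean identity
\begin{equation*}
\chi_j(\lambda_i^-;n)=-T_i\chi_j(\lambda_i^+;n)\,\chi_i(\lambda_i^-;n).
\end{equation*}
Substituting this into the previous expression gives $T_j\rho_i/\rho_i=1-T_i\chi_j(\lambda_i^+;n)\,T_j\chi_i(\lambda_j^+;n)$.

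Finally I would match this against $1-(T_iQ_{ji})(T_jQ_{ij})$ by using the explicit formula \eqref{eq:def-Qij}. A short computation shows that the shifts $T_iQ_{ji}$ and $T_jQ_{ij}$ again produce factors $L_i(\lambda_j^+)$ and $L_j(\lambda_i^+)$ that cancel, and that the $G_i,G_j$ factors in the product $(T_iQ_{ji})(T_jQ_{ij})$ cancel between the two factors, leaving precisely $T_i\chi_j(\lambda_i^+;n)\,T_j\chi_i(\lambda_j^+;n)$. This matches the expression derived above and completes the verification. I expect the main obstacle to be the bookkeeping of the various $G_i$, $L_j(\lambda_i^\pm)$ and sign factors; once one notices that $L_i(\lambda_i^-)=0$ collapses \eqref{eq:chi-i-lin} to the needed identity for $\chi_j(\lambda_i^-;n)$, the rest is straightforward cancellation.
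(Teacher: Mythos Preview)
Your proposal is correct and follows essentially the same route as the paper: both arguments evaluate \eqref{eq:chi-i-lin} at $\lambda=\lambda_i^-$ to relate $T_j\chi_i(\lambda_i^-)/\chi_i(\lambda_i^-)$ to $\chi_j(\lambda_i^-)$, then use the identity $\chi_j(\lambda_i^-)=-T_i\chi_j(\lambda_i^+)\,\chi_i(\lambda_i^-)$ (obtained from \eqref{eq:chi-i-lin} with $i\leftrightarrow j$ at $\lambda=\lambda_i^-$, exploiting $L_i(\lambda_i^-)=0$) to arrive at $1-T_i\chi_j(\lambda_i^+)\,T_j\chi_i(\lambda_j^+)$, and finally match this with $1-(T_iQ_{ji})(T_jQ_{ij})$ via the cancellation of the $G$-factors in \eqref{eq:def-Qij}. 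The only difference is the order of presentation.
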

\begin{proof}
Evaluation of formula \eqref{eq:chi-i-lin} at $\lambda = \lambda_j^-$ gives
\begin{equation} \label{eq:chi-i-l-j}
\chi_i(\lambda_j^-;n) = - T_j \chi_i(\lambda_j^+;n) \chi_j(\lambda_j^-;n).
\end{equation}
Evaluation in turn of \eqref{eq:chi-i-lin} at $\lambda = \lambda_i^-$
gives 
\begin{equation*}
T_j\chi_i(\lambda_i^-;n) \frac{L_j(\lambda_i^-)}{L_j(\lambda_i^+)} -
\chi_i(\lambda_i^-;n) = T_j \chi_i(\lambda_j^+;n) \chi_j(\lambda_i^-;n),
\end{equation*}
which in view of \eqref{eq:chi-i-l-j} implies
\begin{equation*}
T_j\chi_i(\lambda_i^-;n) \frac{L_j(\lambda_i^-)}{L_j(\lambda_i^+)} = 
\chi_i(\lambda_i^-;n) 
\left(1 - T_j \chi_i(\lambda_j^+;n) T_i \chi_j(\lambda_i^+;n)  \right).
\end{equation*}
The last formula, the identification \eqref{eq:def-Qij}, and equation  
\eqref{eq:rho-constr} give the statement.
\end{proof}
Before proving that the $\tau$-function of the quadrilateral lattice equals
essentially to the Fredholm determinant of the integral equation inverting the
nonlocal $\bar\partial$ problem \eqref{eq:db-nonl} with the $\bar\partial$ datum
evolving according to the rule \eqref{eq:evol-R-dis-1}, we will need the
following technical result.
\begin{Lem} \label{lem:evol-KK}
The evolution \eqref{eq:evol-K-dis} of the kernel of the Fredholm equation
implies the following evolution of the determinants in the series defining the 
Fredholm determinant $D_F$ 
\begin{align*}
T_i & K\left( \begin{matrix} \zeta_1 & \zeta_2 & \dots  & \zeta_m \\
\zeta_1 & \zeta_2 & \dots  & \zeta_m \end{matrix} \biggr\rvert \; n \right)
= K\left( \begin{matrix} \zeta_1 & \zeta_2 & \dots  & \zeta_m \\
\zeta_1 & \zeta_2 & \dots  & \zeta_m \end{matrix} \biggr\rvert \; n \right) +\\
+&\sum_{j = 1}^m \left[ L_i(\zeta_j)-1\right] 
K\left( \begin{matrix} \lambda_i^- & \zeta_1 &  \dots  & \check{\zeta}_j
& \dots & \zeta_m \\
\zeta_j & \zeta_1 & \dots  & \check{\zeta}_j
& \dots & \zeta_m \end{matrix} \biggr\rvert \; n \right),
\end{align*}
where the symbol $\check{\zeta}_j$ means that $\zeta_j$ should be removed
from the sequence.
\end{Lem}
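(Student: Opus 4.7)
The plan is to expand both sides as explicit determinants and exploit row-multilinearity. Substituting \eqref{eq:evol-K-dis} into the $(p,q)$-entry of the $m\times m$ matrix whose determinant is the left-hand side gives
\begin{equation*}
T_iK(\zeta_p,\zeta_q;n)=L_i(\zeta_p)^{-1}L_i(\zeta_q)\,K(\zeta_p,\zeta_q;n)+\bigl(1-L_i(\zeta_p)^{-1}\bigr)L_i(\zeta_q)\,K(\lambda_i^-,\zeta_q;n),
\end{equation*}
so the $p$-th row of that matrix is the sum of a scaled copy of the $p$-th row of $(K(\zeta_p,\zeta_q;n))_{p,q}$ and of the $p$-independent vector $\mathbf{v}_q=L_i(\zeta_q)K(\lambda_i^-,\zeta_q;n)$.

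Next I apply multilinearity of the determinant in the rows: this produces a sum of $2^m$ terms indexed by the subset $S\subseteq\{1,\dots,m\}$ of row positions in which $\mathbf{v}$ is selected. Because $\mathbf{v}$ is independent of the row index, any $S$ with $|S|\geq 2$ yields a matrix with two identical rows and therefore vanishes, so only $S=\emptyset$ and the $m$ singletons $S=\{j\}$ can contribute.

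To reassemble the two surviving contributions into the stated form I extract the common factor $L_i(\zeta_q)$ from column $q$ of every matrix, producing an overall factor $\prod_q L_i(\zeta_q)$ which exactly cancels the row prefactor product $\prod_p L_i(\zeta_p)^{-1}$ in the $S=\emptyset$ summand; the resulting expression is precisely the unshifted determinant $K\bigl(\begin{smallmatrix}\zeta_1 & \dots & \zeta_m \\ \zeta_1 & \dots & \zeta_m\end{smallmatrix}\bigr)$. For $S=\{j\}$ the remaining prefactor reduces to $L_i(\zeta_j)\bigl(1-L_i(\zeta_j)^{-1}\bigr)=L_i(\zeta_j)-1$, multiplied by the determinant of $(K(\zeta_p,\zeta_q))$ with its $j$-th row replaced by $(K(\lambda_i^-,\zeta_q))_q$. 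A cyclic row permutation carrying $\lambda_i^-$ from position $1$ to position $j$, combined with the analogous cyclic column permutation moving $\zeta_j$ from position $1$ to position $j$---each of sign $(-1)^{j-1}$, with total sign $+1$---rewrites this as the determinant $K\bigl(\begin{smallmatrix}\lambda_i^- & \zeta_1 & \dots & \check\zeta_j & \dots & \zeta_m \\ \zeta_j & \zeta_1 & \dots & \check\zeta_j & \dots & \zeta_m\end{smallmatrix}\bigr)$ on the right-hand side, completing the identification.

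The only real difficulty is the disciplined bookkeeping of the $L_i$ prefactors and of the sign coming from the final row/column reordering; the substantive observation is that the second term in the decomposition of $T_iK(\zeta_p,\zeta_q;n)$ depends trivially on the row index $p$, so it behaves as a rank-one perturbation which contributes to the determinant only through one row at a time.
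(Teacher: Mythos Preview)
Your argument is correct and is essentially the same as the paper's: both extract the $L_i$ factors from rows and columns (where they cancel), recognise the remaining matrix as a rank-one perturbation of $(K(\zeta_p,\zeta_q))$, and then perform the even row/column permutation at the end. The only cosmetic difference is that the paper phrases the rank-one step via the bordered-determinant/cofactor identity, whereas you reach the same conclusion through row-multilinearity and the vanishing of the $|S|\ge 2$ terms.
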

\begin{proof}
Extracting $L_i(\zeta_j)^{-1}$ from rows and $L_i(\zeta_k)$ from columns of the determinant 
\begin{equation*}
T_i  K\left( \begin{matrix} \zeta_1  & \dots  & \zeta_m \\
\zeta_1 &  \dots  & \zeta_m \end{matrix} \biggr\rvert \; n \right) =
\det \begin{pmatrix}
L_i(\zeta_j)^{-1} \left[ K(\zeta_j,\zeta_k; n) + \left( L_i(\zeta_j) -1
\right) K(\lambda^-_i,\zeta_k;n) \right] L_i(\zeta_k)
\end{pmatrix}_{1\leq j,k \leq m}, 
\end{equation*}
one arrives to the bordered determinant (an object often encountered in the
soliton theory \cite{Hirota-book})
\begin{align*}
\det \begin{pmatrix}
K(\zeta_j,\zeta_k; n) + \left( L_i(\zeta_j) -1
\right) K(\lambda^-_i,\zeta_k;n) \end{pmatrix}_{1\leq j,k \leq m} = \\
\det \begin{pmatrix}
K(\zeta_j,\zeta_k; n)  \end{pmatrix}_{1\leq j,k \leq m}
+ \sum_{j,k=1}^m \left( L_i(\zeta_j) -1 \right) \Delta_{jk}
K(\lambda^-_i,\zeta_k;n), 
\end{align*}
where by $(\Delta_{jk})_{1\leq j,k \leq m}$ we denote the cofactor matrix of
$(K(\zeta_j,\zeta_k; n))_{1\leq j,k \leq m}$. Notice that 
\begin{equation*}
\sum_{k=1}^m \Delta_{jk} K(\lambda^-_i,\zeta_k;n) = 
K\left( \begin{matrix} \zeta_1 &  \dots  & \lambda_i^- & 
& \dots & \zeta_m \\
 \zeta_1 & \dots  & \zeta_j &
& \dots & \zeta_m \end{matrix} \biggr\rvert \; n \right),
\end{equation*}
and, finally, application of an even number of transpositions to the above
determinant concludes the proof.
\end{proof}
From Lemma~\ref{lem:evol-KK} we
immediately obtain the evolution rule of the Fredholm determinant.
\begin{Prop} \label{prop:DF-evol}
The evolution \eqref{eq:evol-K-dis} of the kernel of the Fredholm equation
implies the following evolution of the 
Fredholm determinant $D_F$ 
\begin{equation}
T_i D_F(n) = D_F(n) + \int_\CC D_F(\lambda_i^-,\lambda;n)\left[
L_i(\lambda)-1\right] d\lambda \wedge d\bar\lambda .
\end{equation}
\end{Prop}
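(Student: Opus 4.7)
The plan is to apply the shift operator $T_i$ term-by-term to the series \eqref{eq:Fr-det} defining $D_F$ and then invoke Lemma~\ref{lem:evol-KK}, which rewrites each shifted determinant as the original determinant plus a sum over the $m$ columns/rows indexed by $\zeta_j$. The ``diagonal'' term, summed over $m$, reassembles immediately into $D_F(n)$, accounting for the first summand on the right-hand side of the claimed identity.

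The work is then to identify the remaining contribution, coming from the sum
\begin{equation*}
\sum_{j=1}^{m}\bigl[L_i(\zeta_j)-1\bigr]\,
K\!\left(\begin{matrix} \lambda_i^- & \zeta_1 & \dots & \check{\zeta}_j & \dots & \zeta_m \\
\zeta_j & \zeta_1 & \dots & \check{\zeta}_j & \dots & \zeta_m \end{matrix}\biggr\rvert\; n\right),
\end{equation*}
with the integral $\int_\CC D_F(\lambda_i^-,\lambda;n)[L_i(\lambda)-1]\,d\lambda\wedge d\bar\lambda$. The key observation is that under the symmetric measure $d\zeta_1\wedge d\bar\zeta_1\cdots d\zeta_m\wedge d\bar\zeta_m$ on $\CC^m$, each of the $m$ terms in this sum contributes the same amount: renaming the distinguished variable $\zeta_j\mapsto \lambda$ and the remaining $m-1$ variables to $\zeta_1,\dots,\zeta_{m-1}$ leaves the integral unchanged. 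Therefore the $m$-th term of the shifted series equals
\begin{equation*}
\frac{1}{(m-1)!}\int_\CC [L_i(\lambda)-1]\left(\int_{\CC^{m-1}} K\!\left(\begin{matrix} \lambda_i^- & \zeta_1 & \dots & \zeta_{m-1}\\ \lambda & \zeta_1 & \dots & \zeta_{m-1}\end{matrix}\biggr\rvert\; n\right) d\zeta_1\wedge d\bar\zeta_1\cdots d\zeta_{m-1}\wedge d\bar\zeta_{m-1}\right) d\lambda\wedge d\bar\lambda.
\end{equation*}

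Shifting the summation index by $m'=m-1$ and summing over $m\geq 1$ (equivalently $m'\geq 0$), one recognizes the inner expression as precisely the series \eqref{eq:Fr-min} defining the Fredholm minor $D_F(\lambda_i^-,\lambda;n)$, and the factor $[L_i(\lambda)-1]$ together with the outer integration over $\lambda$ produces the claimed integral term. Absolute convergence of both series (under the standing assumption that $R$ is such that the Fredholm equation is uniquely solvable and $K$ is sufficiently nice) justifies the interchange of shift, summation, and integration.

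The only delicate point is book-keeping in the symmetrization argument: one must verify that the factor $m$ produced by collapsing the sum over $j$ combines correctly with $1/m!$ to yield $1/(m-1)!$, and that the resulting index shift matches the Fredholm minor series starting at $m'=0$ rather than $m'=1$. This is routine, and once checked the proposition follows.
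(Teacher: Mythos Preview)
Your argument is correct and is exactly the computation the paper has in mind: the paper states only that the proposition follows ``immediately'' from Lemma~\ref{lem:evol-KK}, and what you have written is precisely the unpacking of that word --- apply the lemma term-by-term in the series \eqref{eq:Fr-det}, symmetrize so that the $m$ identical contributions turn $1/m!$ into $1/(m-1)!$, and reindex to recover the minor series \eqref{eq:Fr-min}. There is no difference in approach, only in level of detail.
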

We are ready to state the main result of the paper.
\begin{Th}
Within the $\bar\partial$-dressing method with the $\bar\partial$ datum
evolving according to the rule \eqref{eq:evol-R-dis-1} 
the $\tau$-function of the quadrilateral lattice can be identified, up to
standard factor, 
with the Fredholm determinant as follows
\begin{equation}
\tau(n) = D_F(n) \prod_{i<j} A_{ij}^{n_i n_j}, \qquad
\text{where} \quad A_{ij} = \frac{L_i(\lambda_j^-)}{L_i(\lambda_j^+)} = A_{ji}.
\end{equation}
\end{Th}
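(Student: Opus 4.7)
The plan is to verify the defining relation $T_i\tau/\tau = \rho_i$ for the proposed candidate, using Proposition~\ref{prop:rho} for $\rho_i$ and Proposition~\ref{prop:DF-evol} for the evolution of $D_F$. Because the $\tau$-function is determined by these ratios up to a multiplicative constant (which the statement absorbs into a ``standard factor''), matching $T_i\tilde\tau/\tilde\tau$ with $\rho_i$ for every $i$ will be enough.

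First, I would compute $T_i\tilde\tau/\tilde\tau$ for $\tilde\tau(n) = D_F(n)\prod_{i<j}A_{ij}^{n_in_j}$. The product factor contributes $\prod_{j\ne i}A_{ij}^{n_j}$ after using the symmetry $A_{ij}=A_{ji}$ to combine the two kinds of terms ($j<i$ and $j>i$). It remains to show
\begin{equation*}
\frac{T_i D_F(n)}{D_F(n)} \prod_{j\ne i}A_{ij}^{n_j} = -\chi_i(\lambda_i^-;n)\frac{G_i(\lambda_i^-;n)}{G_i(\lambda_i^+;n)}.
\end{equation*}
The product $\prod_{j\ne i}A_{ij}^{n_j}$ is matched to $G_i(\lambda_i^-;n)/G_i(\lambda_i^+;n) = \prod_{j\ne i}(L_j(\lambda_i^-)/L_j(\lambda_i^+))^{n_j}$ by the elementary cross-ratio identity
\begin{equation*}
\frac{L_j(\lambda_i^-)}{L_j(\lambda_i^+)} = \frac{(\lambda_i^--\lambda_j^-)(\lambda_i^+-\lambda_j^+)}{(\lambda_i^--\lambda_j^+)(\lambda_i^+-\lambda_j^-)} = \frac{L_i(\lambda_j^-)}{L_i(\lambda_j^+)} = A_{ij},
\end{equation*}
which also proves the claimed symmetry $A_{ij}=A_{ji}$.

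The substantive step, which I expect to be the main calculation, is to identify
\begin{equation*}
\frac{T_iD_F(n)}{D_F(n)} = -\chi_i(\lambda_i^-;n).
\end{equation*}
To prove this I would apply the solution formula \eqref{eq:F-sol} to the $\bar\partial$-problem with normalization $\eta_i(\lambda) = L_i(\lambda)-1$, evaluated at $\lambda=\lambda_i^-$. Since $L_i(\lambda_i^-)=0$, the inhomogeneous term gives $\eta_i(\lambda_i^-)=-1$, and the integral yields
\begin{equation*}
\chi_i(\lambda_i^-;n) = -1 - \int_\CC \frac{D_F(\lambda_i^-,\lambda';n)}{D_F(n)}\bigl[L_i(\lambda')-1\bigr]\, d\lambda'\wedge d\bar\lambda'.
\end{equation*}
On the other hand, Proposition~\ref{prop:DF-evol} expresses $T_iD_F/D_F$ as $1$ plus exactly the same integral, so the two sides sum to zero, giving the desired identification.

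Combining these pieces gives $T_i\tilde\tau/\tilde\tau = \rho_i$, and the compatibility of these $N$ equations (which is equivalent to the symmetry of the right-hand side of \eqref{eq:rho-constr}) guarantees that $\tilde\tau$ differs from $\tau$ by a multiplicative constant, completing the proof. The principal obstacle is really just the Fredholm-minor manipulation of step two; the rest is bookkeeping with the factorized structure of $G$ and the cross-ratio identity for $A_{ij}$.
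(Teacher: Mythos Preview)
Your proof is correct and follows essentially the same route as the paper: the paper also combines Proposition~\ref{prop:DF-evol} with the Fredholm solution formula \eqref{eq:F-sol} (applied to the normalization $\eta_i=L_i-1$ at $\lambda=\lambda_i^-$) to obtain $T_iD_F/D_F=-\chi_i(\lambda_i^-;n)$, and then matches this with $\rho_i$ via Proposition~\ref{prop:rho} and the cross-ratio identity $L_j(\lambda_i^-)/L_j(\lambda_i^+)=L_i(\lambda_j^-)/L_i(\lambda_j^+)$ that you spell out. The only cosmetic difference is that the paper phrases the last step as ``integration'' of the ratio identity rather than verification of the candidate, but the content is identical.
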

\begin{proof}
Proposition \ref{prop:DF-evol}, formula \eqref{eq:F-sol}, 
and Proposition~\ref{prop:rho}
imply that 
\begin{equation*}
\frac{T_i D_F(n)}{D_F(n)} = -\chi_i(\lambda_i^-;n) = \rho_i(n)
\frac{G_i(\lambda_i^+;n)}{G_i(\lambda_i^-;n)}.
\end{equation*}
Due to equation \eqref{eq:rho-tau} we have
\begin{equation*}
\frac{T_i \tau(n)}{\tau(n)} = \frac{T_i D_F(n)}{D_F(n)} \prod_{j=1,j\ne i}^N \left(
\frac{L_i(\lambda_j^-)}{L_i(\lambda_j^+)} \right)^{n_j},
\end{equation*}
which upon integration gives the statement of the theorem.
\end{proof}

\section{Conclusion and remarks}
We have shown that within the $\bar\partial$-dressing method
the $\tau$-function of the quadrilateral lattice can be identified with the 
Fredholm determinant of the integral equation
inverting the corresponding 
nonlocal $\bar\partial$ problem. The proof of this fact was quite elementary, 
and in a certain aspect (in the continuous limit $\lambda_i^-$ and 
$\lambda_i^+$ coincide, which produces essental singularities in the 
function $G(\lambda)$) even simpler then in the earlier similar paper 
\cite{tau-cn} where the $\tau$-function of the conjugate
nets had been treated. 

It should be noted that the prescribed singularity structure of the functions
$\psi_i(\lambda;n)$ is the same like in the construction of the
algebro-geometric solutions of the discrete Darboux equations \cite{AD-Chicago},
and the $\bar\partial$-dressing method meaning of the first potentials $\rho_i$
given here is a direct counterpart of that given in \cite{AD-Chicago}. 

Finally, we remark that geometrically one can consider quadrilateral lattices in
projective spaces over division rings \cite{gaql} and a substantial part of the
integrability structures (including existence of the first potentials $\rho_i$ 
or the Darboux-type transformations) goes over.
However, it seems that without certain additional (commutativity) conditions 
one cannot define a $\tau$-function of such quadrilateral lattice over general
division ring.  

\bibliographystyle{amsplain}

\end{document}